\newtheorem{thm}{Theorem}[section]
\newtheorem{pro}{Proposition}[section]
\theoremstyle{definition}
\newtheorem{defi}{Definition}[section]
\theoremstyle{remark}
\numberwithin{equation}{section} \hfuzz3pt
\begin{document}
\setcounter{page}{1} \vspace{0.1cm}
\begin{center}
{\large{\bf Multi-period investment strategies under Cumulative Prospect Theory   }}\\
 \vspace{0.4cm}
{\bf Liurui Deng$^1$ and Traian A. Pirvu$^2$}\\
$^1$College of Economics and Management, Hunan Normal University,\\ Changsha, 410081, China\\
purplerosed@yahoo.com\\
$^2$ Department of Mathematics and Statistics, McMaster University\\
1280 Main Street West, Hamilton, ON, L8S 4K1\\
tpirvu@math.mcmaster.ca\\
 \vspace{0.4cm}

\end{center}
%

\vspace{0.2cm}


\begin{abstract}

In this article, inspired by Shi, et al. we investigate the
optimal portfolio selection with one risk-free asset and
one risky asset in a multiple period setting under cumulative
 prospect theory (CPT). Compared with their study, our novelty
  is that we consider a stochastic benchmark, and portfolio constraints.
   We test the sensitivity of the optimal CPT-investment strategies to different
  model parameters by performing a numerical analysis.
\end{abstract}

\noindent {\bf Key words and phrases:}  Cumulative Prospect Theory (CPT), multi-period CPT-investment strategy, CPT-investor, portfolio selection, probability distortions, Choquet integral \\


\section{Introduction}

Expected utility theory (EUT) was a popular decision making model in finance and economics. In reality, however, various decision makers' behaviour deviates from the implications
 of expected utility. Substantial experimental and empirical evidence points out that expected utility theory is
 incompatible with human observed behaviour. The abundant paradoxes lead to the development of a more realistic theory.
 One of them is prospect theory (PT) (see \cite{Tversky1}). Later, PT extends to cumulated prospect theory (CPT) because
 CPT is consistent with the first-order stochastic dominance (see \cite{Tversky2}).

Let us go over some of the literature on CPT. In a continuous-time setting \cite{Jin} formulate a general behavioural portfolio
selection model under Kahneman and Tversky's cumulative prospect theory. In a discrete-time setting \cite{Bernard1} considers
 how a CPT investor chooses his/her optimal portfolio in a single period setting with one risky and one riskless asset. In the same
 vein \cite{He1} and \cite{He2} address and formulate the well-posedness of CPT criterion and investigate the case in which the
 reference point is different from the risk-free return. The extension to a multi asset paradigm was done in \cite{Pirvu} and \cite{kwak}. To the best
 of our knowledge \cite{Car} and \cite{Shi} were the only papers to consider the CPT allocation problem in a multi period framework.

In this  paper, we study the optimal portfolio of a CPT investor. Inspired by \cite{Shi} we focus on the allocation problem with one risk-free
 asset and one risky asset in the multiple periods setting and CPT risk criterion. \cite{Shi} points out that the optimal strategies are time consistent in their setting due to probability distortions. 
 Compared with \cite{Shi} one of the contributions of the present article is to allow for time changing benchmark and portfolio constraints. A time changing benchmark is a more realistic model of portfolio management. The work of \cite{Strub} investigates weather or not reference point updating lead to time-inconsistent investment. They analyze both the time consistent and time inconsistent frameworks and conclude that the updating of the reference point non recursively leads to realistic trading behaviour which are time inconsistent and it embeds the disposition effect.

 We have to face the time inconsistency of optimal investment strategies in our setting. Due to this predicament we consider a special type of benchmarking which renders optimal strategies time consistent. The
 optimal strategy is computed by backward induction (see e.g. \cite{Kwak} and \cite{Pirvu1} for similar techniques). Our main result is a recursive formula to characterize the optimal strategies. The stochastic model we consider for the stock return is fairly general. One interesting feature of our recursive formulas characterizing the optimal
strategies is that they allow for any distribution specification of stock returns. This is explained by the fact that optimal strategies are computed by
backward induction. Our numerical experiments reveal the effect of time on the optimal strategies (this effect can not be observed in one period models).
The main numerical finding in this regard is that the effect of the model parameters on optimal strategies diminishes as time goes by and it gets closer to maturity.

 The remainder of this paper  is organized as follows: In Section 2 we present the model. Section 3 outlines the objective. The results are presented in Section 4. Numerical analysis is performed in Section 5. The paper ends with an Appendix containing the proofs.

\section{The Model}
We consider a financial market with two investment opportunities: one risk-free asset and one risky asset. The investment horizon is $[0,T],$
where $T$ is a finite deterministic positive constant. Let $t$ denote the time of investment which takes on discrete values $(t=0,1,...,T-1).$ The future evolution of risky assets's price is modelled on a probability space $(\Omega, \mathcal{F}_t, \mathcal{F}, P)$. The information set at the beginning of period $t$ is $\mathcal{F}_t.$ Let $r_t$ denote the return of the risk-free asset from time $t$ to time $t+1$, and let $x_t$ denote the return of the risky asset from time $t-1$ to time $t$. We assume that an investor has wealth $W_t$ at time $t$ and that he/she invests the amount $v_t$ in the risky asset and all of remaining wealth $W_t-v_t$ in the risk-free asset. The investor's wealth $W_{t+1}$ at time $t+1$ is given by the self-financing equation
\begin{eqnarray}\label{eq14}
W_{t+1}&=&(1+r_t)(W_{t}-v_t)+v_t(1+x_{t+1})\nonumber\\
&=&(1+r_t)W_t+v_t(x_{t+1}-r_t)\nonumber\\
&=&(1+r_t)W_t+v_t y_{t+1}.
\end{eqnarray}
Here $y_{t+1}$ is a random variable which represents the dollar excess return on the risky asset over the risk-free asset from time $t$ to time $t+1$.
The excess return process $\{y_t\}_{\{ t=0,1,...,T-1\}},$ the interest rate process $\{r_t\}_{\{ t=0,1,...,T-1\}},$ and the amount invested in the risky asset process $\{v_t\}_{\{ t=0,1,...,T-1\}}$ are adapted stochastic processes, i.e., $y_t \in \mathcal{F}_{t},$ $r_t \in \mathcal{F}_{t},$ and $v_t \in \mathcal{F}_{t}.$

\subsection{The Benchmarked Wealth}
Investors asses the performance of their investments in comparison to a benchmark. This approach is modelled by a benchmark process, which at any point in time divides 
investment outcomes into gains and losses as follows: a gain occurs whenever the investment outcomes are above the benchmark, and a loss occurs whenever the investment outcomes are below the benchmark.

Let $B_t, 0\leq t \leq T,$ be a positive benchmark process. The benchmarked wealth at time $t+1,$ given initial time $t$ is
\begin{equation}\label{eq1}
\overline {W}^{t+1}_{t}= {W}^{t+1}_{t}- B_{t+1}.
\end{equation}
Given the initial time $t$ the benchmarked wealth at time $t+2$ is
\begin{equation}\label{eq12}
\overline{W}^{t+2}_{t}=W^{t+2}_{t}-B_{t+2}
\end{equation}
 In general, the benchmarked wealth is given by 
\begin{equation}\label{1e1}
\overline{W}_{t}^s={W}_{t}^s-B_{s}, s=t+1,t+2,...,T.
\end{equation}
\subsubsection{Examples of Benchmark}
One benchmark process is the constant throughout time benchmark $B,$ where $B$ is a constant. This kind of benchmark is considered in \cite{Shi}. 
Another benchmark process is obtained by investing the initial/current wealth in the risk free asset. Other examples of benchmarks are an index of stochastic assets, and the expected
wealth (for more on this see \cite{Pirvu}). 

\subsection{Portfolio Constraints}
It is often the case that managers impose risk limits on the trading strategies. The risk control mechanism has two components: one internal (imposed by risk management departments) and one external
( accredited regulatory institutions). Thus, it is only natural to consider portfolio constraints. In the following we introduce the class of admissible strategies.
\begin{defi}
The set of admissible strategies at time $t$ is $\mathcal{{A}}_{t},$
\begin{equation}\label{const}
\mathcal{{A}}_{t}= \{v_{t}\in \mathcal{{F}}_{t} | \quad A |W_{t}|\leq v_{t}\leq B |W_{t}| \}
\end{equation}
for some constants $A\leq 0<B.$
\end{defi}
This formal definition is saying that the investments in the risky asset are acceptable if the corresponding portfolio proportion invested in the risky asset is bounded.

The fact that  $A\leq 0$ is saying that shortselling is allowed in this market. In practice there are rules about shortselling; for the special case of Chinese financial markets: 
"The underlying stocks for shortselling should have more than 200 million shares or the market value in circulation should have more than 800 million RMB; the number of
 shareholders should be more than 4000," conform page 263 in \cite{G}.

The portfolio constraint $A |W_{t}|\leq v_{t}$ sets a (negative floor) on the portfolio proportion invested in the risky asset, which is the same as limiting the shortselling amount
(some portfolio managers tend to do that). On the other hand, the constraint $v_{t}\leq B |W_{t}|$ limits the exposure to the risky asset, which is in line with the industry practice: "In 1991, the National Association of Insurance Commissioners (NAIC) imposed higher reserve requirements on insurance companies holdings of junk bonds, specifying a 20$\%$ cap on the assets insurers may hold in junk bonds. Many pension funds place limits on the fraction of a portfolio that can be invested in junk bonds," conform \cite{DG} (junk bonds are a special type of risky assets which have a high return, but also posses a high risk).

\subsection{The CPT Risk Criterion}

The investor gets utility from gains and disutility from the losses. The benchmark differentiates gains from losses.
Let us introduce the following formal definitions.

\begin{defi}(see \cite{Tversky1} and \cite{Tversky2}) The value function $u$ is defined as follows:
\[u(x)=\left\{\begin{array}{ll}
u^+(x)&\text{if $x\geq 0$ },\\
-u^-(-x)&
\text{if $x<0$},
\end{array}\right.\]

where $u^+:\overline{\mathbb{R}}^+\rightarrow \overline{\mathbb{R}}^+$ and $u^-:\overline{\mathbb{R}}^+\rightarrow\overline{\mathbb{R}}^+$ satisfy:

(i) $u(0)=u^+(0)=u^-(0)=0;$

(ii)$u^+(+\infty)=u^-(+\infty)=+\infty;$

(iii)$u^+(x)=x^\alpha,$ with$ 0<\alpha<1$ and $x\geq 0;$

(iv)$u^-(x)=\lambda x^\alpha$ with $\lambda>1,$ and $x\geq 0.$
\end{defi}

It is important to point out that the CPT objective considers deviation from benchmark, so the argument $x$ in the utility function $u(x)$ is not a wealth level but a deviation of wealth from benchmark.

\begin{defi}
Let $F_{X}(\cdot)$ be the cumulative distribution function (CDF) of a random variable $X$. The probability distortions are denoted by $T^+$ and $T^-$. We define the two probability weight functions (distortions) $T^+: [0,1] \rightarrow [0,1]$ and $T^-: [0,1] \rightarrow [0,1]$ as follows:
\begin{eqnarray*}
&&T^+(F_{X}(x))=\frac{F_{X}^\gamma(x)}{(F_{X}^\gamma(x)+(1-F_{X}(x))^\gamma)^{1/\gamma}}, ~~\text{with}~~0.28<\gamma<1,\\
&&T^-(F_{X}(x))=\frac{F_{X}^\delta(x)}{(F_{X}^\delta(x)+(1-F_{X}(x))^\delta)^{1/\delta}}, ~~\text{with}~~0.28<\delta<1.
\end{eqnarray*}
\end{defi}

\begin{defi}\label{def1}
Define the objective function of the CPT-investor, denoted by $U(W)$, as:
\begin{eqnarray}
U(W)=\int_0^{+\infty}T^+(1-F_{W}(x))du^+(x)-\int_0^{+\infty}T^-(F_{W}(-x))du^-(x),
\end{eqnarray}
where $W$ is the benchmarked wealth.
$U(W)$ is a sum of two Choguet integrals (see \cite{Choquet} and \cite{Chateauneuf}). It is well-defined when
$$\alpha<2\min(\delta,\gamma),$$
(see Proposition 2 in \cite{Pirvu}).
\end{defi}
From Definition \ref{def1} it follows that
an objective function of the CPT-investor at the time $t$ is:
\begin{eqnarray}
U(\overline{W}_{t}^T)=\int_0^{+\infty}T^+(1-F_{\overline{W}_{t}^T}(x))du^+(x)-\int_0^{+\infty}T^-(F_{\overline{W}_{t}^T}(-x))du^-(x).
\end{eqnarray}

As we mentione in Subsection \ref{inc} this risk criterion is time inconsistent, thus we propose a special case of benchmarked wealth. That is, perform benchmarking at time $T-1$ only\footnote{This assumption is made to render the portfolio problem tractable.}. Thus, if the initial time is $t$ ($t=0,1,2,...,T-1$), then the benchmarked wealth at $T$ is:
\begin{equation}\label{eu1}
\underline{W}_{t}^T=v_{T-1}y_{T}\quad\mbox{given that the initial time is}\,\,\,t.\end{equation}

The objective function is then defined as follows
\begin{equation}\label{!1}
U(\underline{W}_{T-1}^T)=\int_0^{+\infty}T^+(1-F_{\underline{W}_{T-1}^T}(x))du^+(x)-\int_0^{+\infty}T^-(F_{\underline{W}_{T-1}^T}(-x))du^-(x),
\end{equation}
and
\begin{equation}\label{!2}
U(\underline{W}_{t}^T)=E[U(\underline {W}_{T-1}^T) |\mathcal{F}_{t}].
\end{equation}
The above equation naturally extends $U(\underline{W}_{T-1}^T)$ at earlier times and this makes sense in light of \eqref{eu1}.  

\section{Objective}
In this section we formulate the CPT investor objective. In a first step we consider the one period model.

\subsection{Single Period  Objective}
We assume that the current time is $T-1$. Recall that the benchmarked wealth at time $T$ is
\begin{eqnarray*}
\underline{W}_{T-1}^T=v_{T-1} y_{T} .
\end{eqnarray*}

The CPT-investor objective is to find the portfolio which leads to the highest possible prospect value; consequently, the investor solves the following portfolio problem
\begin{eqnarray}
&(P)&\nonumber\\
&&\max_{v_{T-1}\in \mathcal{{A}}_{T-1}}U(\underline{W}_{T-1}^T)\nonumber\\
&=&\max_{v_{T-1}\in  \mathcal{{A}}_{T-1}}\Big[\int_0^{+\infty}T^+(1-F_{\underline{W}_{T-1}^T}(x))du^+(x)-\int_0^{+\infty}T^-(F_{\underline{W}_{T-1}^T}(-x))du^-(x)\Big].\nonumber\\
\end{eqnarray}
The portfolio yielding the highest possible prospect value, i.e., the optimal portfolio, is $v_{T-1}^*$ given by
\begin{eqnarray}
v_{T-1}^*&=&arg \max_{v_{T-1}\in  \mathcal{{A}}_{T-1}}U(\underline{W}_{T-1}^T)\nonumber\\
&=&arg\max_{v_{T-1}\in  \mathcal{{A}}_{T-1}}\Big[\int_0^{+\infty}T^+(1-F_{\underline{W}_{T-1}^T}(x))du^+(x)-\int_0^{+\infty}T^-(F_{\underline{W}_{T-1}^T}(-x))du^-(x)\Big].\nonumber\\
\end{eqnarray}

\subsection{Multiple Periods Objective}
Let us consider a multiple periods model. In order to get a time consistent optimal strategies we work with the benchmarked wealth $\underline{W}_{t}^T,$ defined in \eqref{eu1}. The CPT-investor objective is to solve the following portfolio problem
\begin{eqnarray}
&(P)&\nonumber\\
&&\max_{v_{i}\in \mathcal{{A}}_{i},\, i=t,t+1,\ldots T-1}U(\underline{W}_{t}^T)\nonumber\\
&=&\max_{v_{i}\in  \mathcal{{A}}_{i},\, i=t,t+1,\ldots T-1} E \Big[\int_0^{+\infty}T^+(1-F_{\underline{W}_{T-1}^T}(x))du^+(x)-\int_0^{+\infty}T^-(F_{\underline{W}_{T-1}^T}(-x))du^-(x) \Big|  \mathcal{{F}}_{t} \Big].\nonumber\\
\end{eqnarray}
Having stated the objectives let us turn to the issue of time inconsistency.

\subsection{Time Inconsistency}\label{inc}
As already pointed out in \cite{Shi} the optimal strategy is time inconsistent due to the nonlinear distortions. That is the optimal strategy computed in the past is not implemented unless there is a commitment mechanism. Let us define $v_{T-1}^*$ the time $T-1$ portfolio and $(\hat{v}_{T-2},\hat{v}_{T-1})$ the time $T-2$ optimal portfolio:
$$ v_{T-1}^*=arg \max_{v_{T-1}\in  \mathcal{{A}}_{T-1}}U(\underline{W}_{T-1}^T),$$ and
$$ (\hat{v}_{T-2},\hat{v}_{T-1})=arg \max_{v_{T-2}\in\mathcal{{A}}_{T-2}, v_{T-1}\in\mathcal{{A}}_{T-1}}U(\underline{W}_{T-2}^T).$$
It turns out that $v_{T-1}^*\neq \hat{v}_{T-1},$ which means that the investor is time inconsistent, i.e., the time $T-2$ optimal portfolio is not implemented at time $T-1$ when another portfolio
is optimal.

\section{Results}

After formulating the objective we are ready to present our results. Our exposition starts with the single period case.
\subsubsection{One Period Model}
We manage to analyze the investor portfolio optimization problem in terms of the wealth being positive/negative.
The result is summarized in the following Proposition. 
\begin{pro}\label{1}
 The optimal prospect value is 
\begin{eqnarray}
\max_{v_{T-1}\in  \mathcal{{A}}_{T-1}   }U(\underline{W}_{T-1}^T)=
W_{T-1}^\alpha A_{T-1}I_{W_{T-1}\geq 0}-(-W_{T-1})^\alpha B_{T-1} I_{W_{T-1}<0},\nonumber\\
\end{eqnarray}
where
\begin{eqnarray}\label{t1}
A_{T-1}=\max_{z\in[A,B]}g_{T-1}(z),
\end{eqnarray}
\begin{eqnarray}\label{t2}
B_{T-1}=-\max_{z\in[-B,-A]}l_{T-1}(z),
\end{eqnarray}
\begin{eqnarray*}
g_{T-1}(z)=z^\alpha k(T-1)I_{z\in[0,B]}+(-z)^\alpha h(T-1)I_{z\in[A,0]}
\end{eqnarray*}
\begin{eqnarray*}
l_{T-1}(z)=(-z)^\alpha k(T-1)I_{z\in[-B,0]}+z^\alpha h(T-1)I_{z\in[0,-A]}
\end{eqnarray*}
\begin{eqnarray*}
k(T-1)=\int_0^{+\infty}T^+(1-F_{y_{T}}(x))du^+(x)-\int_0^{+\infty}T^-(F_{y_{T}}(-x))du^-(x)
\end{eqnarray*}
and
\begin{eqnarray*}
h_(T-1)=\int_0^{+\infty}T^+(1-F_{-y_{T}}(x))du^+(x)-\int_0^{+\infty}T^-(F_{-y_{T}}(-x))du^-(x)
\end{eqnarray*}
\end{pro}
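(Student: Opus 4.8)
The plan is to exploit the positive homogeneity of degree $\alpha$ of the CPT functional $U$ and thereby reduce the optimization over the dollar amount $v_{T-1}$ to a one-dimensional optimization over the portfolio proportion. The first and central step is a homogeneity lemma: for every constant $c>0$ and every random variable $X$ one has $U(cX)=c^{\alpha}U(X)$. I would prove this by the substitution $x=cs$ in each of the two Choquet integrals defining $U$, using $F_{cX}(x)=F_{X}(x/c)$ so that $1-F_{cX}(cs)=1-F_{X}(s)$ and $F_{cX}(-cs)=F_{X}(-s)$, together with $du^{\pm}(cs)=c^{\alpha}\,du^{\pm}(s)$, which is immediate from $u^{+}(x)=x^{\alpha}$ and $u^{-}(x)=\lambda x^{\alpha}$. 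The distortions $T^{\pm}$ cause no difficulty, since they are applied to the distorted probabilities $1-F_{cX}$ and $F_{cX}(-\cdot)$, which the substitution turns into exactly $1-F_{X}$ and $F_{X}(-\cdot)$. In particular, comparing with the stated definitions gives $U(y_{T})=k(T-1)$ and $U(-y_{T})=h(T-1)$.

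Next I would parametrize $v_{T-1}=zW_{T-1}$, so that $\overline{W}_{T-1}^{T}=zW_{T-1}y_{T}$, and rewrite the constraint $A|W_{T-1}|\le v_{T-1}\le B|W_{T-1}|$ as a constraint on $z$, splitting on the sign of $W_{T-1}$. When $W_{T-1}>0$ the constraint reads $z\in[A,B]$. For $z\ge 0$ I apply the homogeneity lemma with $c=zW_{T-1}$ to obtain $U=W_{T-1}^{\alpha}z^{\alpha}k(T-1)$; for $z<0$ I write $zW_{T-1}y_{T}=(-z)W_{T-1}(-y_{T})$ and apply the lemma with $c=(-z)W_{T-1}$ to obtain $U=W_{T-1}^{\alpha}(-z)^{\alpha}h(T-1)$. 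These two branches are precisely $W_{T-1}^{\alpha}g_{T-1}(z)$, so maximizing over $z\in[A,B]$ yields $W_{T-1}^{\alpha}A_{T-1}$.

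When $W_{T-1}<0$, dividing the constraint by $W_{T-1}<0$ reverses the inequalities and gives $z\in[-B,-A]$; using $|W_{T-1}|=-W_{T-1}$, the same two subcases (now keyed to the sign of $z$) produce $U=(-W_{T-1})^{\alpha}(-z)^{\alpha}k(T-1)$ for $z\in[-B,0]$ and $U=(-W_{T-1})^{\alpha}z^{\alpha}h(T-1)$ for $z\in[0,-A]$, which together equal $(-W_{T-1})^{\alpha}l_{T-1}(z)$. Maximizing over $z\in[-B,-A]$ gives $(-W_{T-1})^{\alpha}\max_{z\in[-B,-A]}l_{T-1}(z)=-(-W_{T-1})^{\alpha}B_{T-1}$ by the definition of $B_{T-1}$. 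The degenerate case $W_{T-1}=0$ forces $\mathcal{A}_{T-1}=\{0\}$, hence $U(\overline{W}_{T-1}^{T})=U(0)=0$, consistent with both indicator terms vanishing. Combining the three cases delivers the claimed formula.

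I expect the homogeneity lemma to be the main obstacle: one must justify the change of variables inside the two improper integrals and verify that the reflection $X\mapsto -X$, which appears whenever the proportion is negative, correctly interchanges the roles of $k$ and $h$ (equivalently of the gain and loss parts $u^{+}$ and $u^{-}$). Integrability of the rescaled integrands is inherited from the well-posedness condition $\alpha<2\min(\delta,\gamma)$ in Definition \ref{def1}, so no new convergence issues arise. Once homogeneity is established, the remainder is routine sign bookkeeping matching the portfolio proportion to the definitions of $g_{T-1}$ and $l_{T-1}$.
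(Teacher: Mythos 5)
Your proposal is correct and follows essentially the same route as the paper's Appendix~B: parametrize $v_{T-1}=zW_{T-1}$, split on the signs of $W_{T-1}$ and $z$, use positive homogeneity of $U$ to reduce each branch to $k(T-1)=U(y_T)$ or $h(T-1)=U(-y_T)$, and then maximize $g_{T-1}$ and $l_{T-1}$ over the constraint interval. The only difference is that where the paper simply invokes the result of Bernard and Ghossoub for the scaling step $U(v\,y_T)=v^\alpha U(y_T)$, $v\ge 0$, you prove it directly by the change of variables $x=cs$ in the two Choquet integrals (and you also treat the degenerate case $W_{T-1}=0$, which the paper leaves implicit), making your argument self-contained but not structurally different.
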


\begin{proof}
The proof is given in Appendix A.
\end{proof}

\begin{thm}\label{th111}
The optimal CPT-investment strategy is
\[v_{T-1}^*=\left\{\begin{array}{ll}
k_{T-1}^*W_{T-1}&\text{if $W_{T-1}\geq 0$ },\\
\hat{k}^*_{T-1}W_{T-1}&
\text{if $W_{T-1}<0$},
\end{array}\right.\]
where
\[k_{T-1}^*=arg\max_{z\in[A,B]}g_{T-1}(z),\qquad \hat{k}_{T-1}^*=arg\max_{z\in[-B,-A]}l_{T-1}(z).\]
\end{thm}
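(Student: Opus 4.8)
The plan is to derive the optimal strategy as the \emph{argmax} companion of the value already computed in Proposition~\ref{1}. The engine of the whole argument is the positive homogeneity of degree $\alpha$ of the CPT functional: for every constant $c>0$ and every random variable $X$ one has $U(cX)=c^\alpha U(X)$. This holds because $u^+(x)=x^\alpha$ and $u^-(x)=\lambda x^\alpha$ are homogeneous of degree $\alpha$, while the distortions $T^\pm$ act on the law of $X$, which is invariant under positive scaling since $F_{cX}(x)=F_X(x/c)$; substituting $x\mapsto cx$ in each Choquet integral of Definition~\ref{def1} pulls out the factor $c^\alpha$. I would establish this homogeneity first, since it is what collapses the optimization over $v_{T-1}$ to a one-dimensional problem.

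Next I would use the single-period benchmarked wealth $\overline{W}_{T-1}^T=v_{T-1}y_{T}$ together with the admissibility set $A|W_{T-1}|\le v_{T-1}\le B|W_{T-1}|$. Writing $v_{T-1}=z|W_{T-1}|$ turns admissibility into $z\in[A,B]$, and I would split on the sign of the scalar $z|W_{T-1}|$. When it is nonnegative I factor it in front of $y_{T}$, apply homogeneity, and invoke $U(y_{T})=k(T-1)$; when it is negative I write $\overline{W}_{T-1}^T=\bigl(-z|W_{T-1}|\bigr)(-y_{T})$ and invoke $U(-y_{T})=h(T-1)$. Combining the two cases yields $U(\overline{W}_{T-1}^T)=|W_{T-1}|^\alpha\,g_{T-1}(z)$, which is exactly the reduction underlying Proposition~\ref{1}.

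The conclusion then follows by separating on the sign of $W_{T-1}$. If $W_{T-1}\ge 0$ then $|W_{T-1}|=W_{T-1}$ and $v_{T-1}=zW_{T-1}$ with $z\in[A,B]$, so $U(\overline{W}_{T-1}^T)=W_{T-1}^\alpha\,g_{T-1}(z)$; because the prefactor $W_{T-1}^\alpha\ge 0$ does not depend on the control, maximizing over admissible $v_{T-1}$ is equivalent to maximizing $g_{T-1}$ over $[A,B]$, and the maximizer is $z=k_{T-1}^*=\arg\max_{z\in[A,B]}g_{T-1}(z)$, whence $v_{T-1}^*=k_{T-1}^*W_{T-1}$. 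If $W_{T-1}<0$ then $|W_{T-1}|=-W_{T-1}$; setting $\hat k=-z$ maps $[A,B]$ bijectively onto $[-B,-A]$, and matching the indicator supports gives the identity $g_{T-1}(-\hat k)=l_{T-1}(\hat k)$, so that $U(\overline{W}_{T-1}^T)=(-W_{T-1})^\alpha\,l_{T-1}(\hat k)$. Again the prefactor is nonnegative and independent of the control, so the optimizer is $\hat k=\hat k_{T-1}^*=\arg\max_{z\in[-B,-A]}l_{T-1}(z)$, giving $v_{T-1}^*=\hat k_{T-1}^*W_{T-1}$.

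The delicate part is bookkeeping rather than depth: one must track the sign of $z|W_{T-1}|$ to decide whether homogeneity surfaces $U(y_{T})$ or $U(-y_{T})$, and handle the degenerate control $v_{T-1}=0$ (where $U=0$) separately so that homogeneity is only invoked with a strictly positive constant. No independent existence step is required, since Proposition~\ref{1} already certifies that the reduced problems attain their maxima $A_{T-1}$ and $-B_{T-1}$; the argmax identity is then inherited from the affine bijection $v_{T-1}\leftrightarrow z$ (respectively $\hat k$) and the fact that the objective is a nonnegative constant multiple of $g_{T-1}$ (respectively $l_{T-1}$).
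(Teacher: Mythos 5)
Your proposal is correct and takes essentially the same route as the paper: Theorem \ref{th111} is derived there directly from Proposition \ref{1}, whose Appendix~B proof performs exactly your sign-splitting reduction to a one-dimensional problem in $z$ and uses the degree-$\alpha$ homogeneity of $U$ (invoked there by citing \cite{Bernard1} rather than proved by the substitution argument you give). The only differences are cosmetic: you parametrize by $z|W_{T-1}|$ and verify $g_{T-1}(-\hat k)=l_{T-1}(\hat k)$ explicitly, while the paper writes $v_{T-1}=zW_{T-1}$ and treats the four sign cases separately.
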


\begin{proof}
The above conclusion is derived from Proposition \ref{1}.
\end{proof}

\subsubsection{Multiple Periods Model }
Subsequently, we consider the optimal strategies in the multiple periods. In order to ease the exposition we present the two period model case; when the initial time is $T-2,$ the following result is established.
\begin{pro}\label{th2}
The maximum CPT value is given recursively by
\begin{eqnarray}
 \max_{v_{T-2}\in\mathcal{{A}}_{T-2}, v_{T-1}\in\mathcal{{A}}_{T-1}} U(\underline{W}_{T-2}^T)&=&A_{T-2}W_{T-2}^\alpha I_{W_{T-2}\geq0}-B_{T-2}(-W_{T-2})^\alpha I_{W_{T-2}<0},\nonumber\\
\end{eqnarray}
where
\begin{equation}\label{tt1}
A_{T-2}=\max_{z\in[A,B]} g_{T-2}(z),
\end{equation}
\begin{eqnarray}\label{tt2}
B_{T-2}=-\max_{z\in[-B,-A]}l_{T-2}(z),
\end{eqnarray}
\begin{eqnarray*}
&&g_{T-2}(z)=E[A_{T-1}(1+r_{T-2}+y_{T-1}z)^\alpha I_{\{1+r_{T-2}+y_{T-1}z\geq0\}}\\
&-&B_{T-1}(-1-r_{T-2}-y_{T-1}z)^\alpha I_{\{1+r_{T-2}+y_{T-1}z<0\}}|\mathcal{F}_{T-2}]
\end{eqnarray*}
\begin{eqnarray*}
&&l_{T-2}(z)=E[A_{T-1}(-1-r_{T-2}-y_{T-1}z)^\alpha I_{\{1+r_{T-2}+y_{T-1}z<0\}}\\
&-&B_{T-1}(1+r_{T-2}+y_{T-1}z)^\alpha I_{\{1+r_{T-2}+y_{T-1}z\geq0\}}|\mathcal{F}_{T-2}].
\end{eqnarray*}
and $A_{T-1}, B_{T-1}$ are given by \eqref{t1}, \eqref{t2}.
\end{pro}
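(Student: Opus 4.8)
The plan is to prove the two-period formula by backward induction, treating the joint maximization over $(v_{T-2},v_{T-1})$ as a nested optimization and invoking Proposition \ref{1} for the inner stage. First I would rewrite the objective using the definition \eqref{!2} together with the tower property of conditional expectation. Since $\underline{W}_{T-1}^T=v_{T-1}y_T$ and the Choquet integral $U(\underline{W}_{T-1}^T)$ depends only on $v_{T-1}$ and the conditional law of $y_T$ given $\mathcal{F}_{T-1}$, the quantity $U(\underline{W}_{T-1}^T)$ is $\mathcal{F}_{T-1}$-measurable. This lets me push the inner maximum inside the conditional expectation,
\begin{equation*}
\max_{v_{T-2},v_{T-1}} E\big[ U(\underline{W}_{T-1}^T) \,\big|\, \mathcal{F}_{T-2}\big] = \max_{v_{T-2}\in\mathcal{A}_{T-2}} E\Big[\max_{v_{T-1}\in\mathcal{A}_{T-1}} U(\underline{W}_{T-1}^T) \,\Big|\, \mathcal{F}_{T-2}\Big],
\end{equation*}
and then apply Proposition \ref{1} to replace the inner maximum by $A_{T-1}W_{T-1}^\alpha I_{W_{T-1}\geq 0}-B_{T-1}(-W_{T-1})^\alpha I_{W_{T-1}<0}$.

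Next I would substitute the wealth recursion \eqref{eq14}. Using the admissibility constraint \eqref{const}, I parametrize $v_{T-2}=zW_{T-2}$, so that $W_{T-1}=W_{T-2}(1+r_{T-2}+y_{T-1}z)$; because of the absolute value in \eqref{const}, the admissible range of $z$ is $[A,B]$ when $W_{T-2}\geq 0$ and $[-B,-A]$ when $W_{T-2}<0$. In the case $W_{T-2}\geq 0$ the sign of $W_{T-1}$ agrees with the sign of $1+r_{T-2}+y_{T-1}z$, so the homogeneity $(W_{T-2}c)^\alpha=W_{T-2}^\alpha c^\alpha$ lets me factor $W_{T-2}^\alpha$ out of both terms. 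Pulling the $\mathcal{F}_{T-2}$-measurable factor $W_{T-2}^\alpha$ out of the conditional expectation, the inner expression becomes exactly $W_{T-2}^\alpha g_{T-2}(z)$, and maximizing over $z\in[A,B]$ produces $A_{T-2}W_{T-2}^\alpha$ by the definition \eqref{tt1}.

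The case $W_{T-2}<0$ requires more careful sign bookkeeping, which is where I expect the only real friction. Here $-W_{T-2}>0$, so $W_{T-1}\geq 0$ is equivalent to $1+r_{T-2}+y_{T-1}z\leq 0$, i.e. the two indicator events swap roles relative to the positive-wealth case. Tracking this swap, $W_{T-1}^\alpha I_{W_{T-1}\geq 0}$ becomes $(-W_{T-2})^\alpha(-1-r_{T-2}-y_{T-1}z)^\alpha I_{\{1+r_{T-2}+y_{T-1}z<0\}}$ and $(-W_{T-1})^\alpha I_{W_{T-1}<0}$ becomes $(-W_{T-2})^\alpha(1+r_{T-2}+y_{T-1}z)^\alpha I_{\{1+r_{T-2}+y_{T-1}z\geq 0\}}$. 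Factoring $(-W_{T-2})^\alpha$ out of the conditional expectation then yields $(-W_{T-2})^\alpha l_{T-2}(z)$, and maximizing over $z\in[-B,-A]$ gives $-B_{T-2}(-W_{T-2})^\alpha$ by \eqref{tt2}. Combining the two cases with the indicators $I_{W_{T-2}\geq 0}$ and $I_{W_{T-2}<0}$ delivers the claimed recursive formula.

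The step I regard as the genuine obstacle is the interchange of the maximum over $v_{T-1}$ with the conditional expectation $E[\,\cdot\,|\mathcal{F}_{T-2}]$: this is the dynamic programming principle, and it requires the inner value to be attained by an $\mathcal{F}_{T-1}$-measurable selection $v_{T-1}^*$ (so that it remains admissible and preserves the equality rather than only the inequality $\geq$), together with the $\mathcal{F}_{T-1}$-measurability of $A_{T-1}$ and $B_{T-1}$ inherited from the conditional law of $y_T$. Theorem \ref{th111} supplies the optimizer in feedback form $v_{T-1}^*=k^*_{T-1}W_{T-1}$, which gives exactly such a measurable selection. Once this point is granted, the remainder reduces to the homogeneity of $x\mapsto x^\alpha$ and the indicator-event bookkeeping described above.
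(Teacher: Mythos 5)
Your proposal is correct and follows essentially the same route as the paper's own proof in Appendix C: rewrite $U(\underline{W}_{T-2}^T)$ via the tower property, interchange the inner maximum over $v_{T-1}$ with the conditional expectation, invoke Proposition \ref{1}, substitute $v_{T-2}=zW_{T-2}$ into the wealth recursion \eqref{eq14}, and split on the sign of $W_{T-2}$ to identify $g_{T-2}$ and $l_{T-2}$. The only difference is that you make explicit two points the paper leaves implicit --- the measurable-selection justification for the dynamic programming interchange and the indicator-event swap when $W_{T-2}<0$ --- which strengthens rather than alters the argument.
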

\begin{proof}
The proof is done in Appendix B.
\end{proof}

As a consequence of this we can obtain the following key Proposition and Theorem for multiple periods.
They provide an algorithm to compute iteratively backward (starting from the last time period $[T-1,T]$ and then going to $[T-2,T],$ etc)
the optimal prospect value and the corresponding optimal portfolio. 
\begin{pro}\label{th3}
Given the initial time $t,$ the optimal CPT value is given recursively by
\begin{eqnarray}
 \max_{v_{i}\in \mathcal{{A}}_{i},\, i=t,t+1,\ldots T-1}U(\underline{W}_{t}^T)&=&A_{t}W_{t}^\alpha I_{W_{t}\geq0}-B_{t}(-W_{t})^\alpha I_{W_{t}<0},
\end{eqnarray}
where
$$A_{t}=\max_{z\in[A,B]} g_{t}(z),$$
\begin{eqnarray*}
B_{t}=-\max_{z\in[-B,-A]}l_{t}(z),
\end{eqnarray*}
\begin{eqnarray*}
&&g_{t}(z)=E[A_{t+1}(1+r_{t}+y_{t+1}z)^\alpha I_{\{1+r_{t}+y_{t+1}z\geq0\}}\\
&-&B_{t+1}(-1-r_{t}-y_{t+1}z)^\alpha I_{\{1+r_{t}+y_{t+1}z<0\}}|\mathcal{F}_{t}]
\end{eqnarray*}
\begin{eqnarray*}
&&l_{t}(z)=E[A_{t+1}(-1-r_{t}-y_{t+1}z)^\alpha I_{\{1+r_{t}+y_{t+1}z<0\}}\\
&-&B_{t+1}(1+r_{t}+y_{t+1}z)^\alpha I_{\{1+r_{t}+y_{t+1}z\geq0\}}|\mathcal{F}_{t}],
\end{eqnarray*}
$A_{T-1}, B_{T-1}$ are given by \eqref{t1}, \eqref{t2}, $A_{T-2}, B_{T-2}$ are given by \eqref{tt1}, \eqref{tt2}, etc.
\end{pro}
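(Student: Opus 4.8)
The plan is to prove the recursive characterization by backward induction on $t$, descending from $t=T-1$ to the initial time, with the dynamic programming principle supplying the inductive step. The base case $t=T-1$ is exactly Proposition \ref{1}, which already exhibits the value function in the desired shape $A_{T-1}W_{T-1}^\alpha I_{W_{T-1}\geq 0}-B_{T-1}(-W_{T-1})^\alpha I_{W_{T-1}<0}$, and Proposition \ref{th2} carries out the step from $T-1$ to $T-2$; the general inductive step repeats the latter verbatim with $T-2$ replaced by $t$ and $T-1$ by $t+1$.

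For the inductive step I would assume the formula at $t+1$,
\[
\max_{v_i\in\mathcal{A}_i,\,i=t+1,\ldots,T-1} U(\underline{W}_{t+1}^T) = A_{t+1}W_{t+1}^\alpha I_{W_{t+1}\geq 0} - B_{t+1}(-W_{t+1})^\alpha I_{W_{t+1}<0},
\]
and exploit that, by \eqref{!2} and the tower property, $U(\underline{W}_t^T)=E[\,U(\underline{W}_{T-1}^T)\mid\mathcal{F}_t\,]=E[\,E[\,U(\underline{W}_{T-1}^T)\mid\mathcal{F}_{t+1}\,]\mid\mathcal{F}_t\,]$, since $U(\underline{W}_{T-1}^T)$ is $\mathcal{F}_{T-1}$-measurable. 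As the control $v_t$ influences the problem only through the wealth $W_{t+1}=(1+r_t)W_t+v_ty_{t+1}$ (directly, and through the downstream constraint sets), whereas $v_{t+1},\ldots,v_{T-1}$ are chosen after $\mathcal{F}_{t+1}$ is revealed, the maximization separates:
\[
\max_{v_i,\,i=t,\ldots,T-1} E[\,U(\underline{W}_{T-1}^T)\mid\mathcal{F}_t\,]=\max_{v_t\in\mathcal{A}_t} E\Big[\max_{v_i,\,i=t+1,\ldots,T-1}E[\,U(\underline{W}_{T-1}^T)\mid\mathcal{F}_{t+1}\,]\,\Big|\,\mathcal{F}_t\Big],
\]
and the inner maximum is precisely the induction hypothesis evaluated at $W_{t+1}$.

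Next I would reparametrize $v_t=zW_t$ and split according to the sign of $W_t$, using the positive homogeneity of degree $\alpha$ of $x\mapsto x^\alpha$. When $W_t\geq 0$ the constraint $A|W_t|\leq v_t\leq B|W_t|$ becomes $z\in[A,B]$, and $W_{t+1}=W_t(1+r_t+zy_{t+1})$ shares the sign of $1+r_t+zy_{t+1}$; substituting into the induction hypothesis and factoring out $W_t^\alpha$ yields $W_t^\alpha g_t(z)$, so the value is $W_t^\alpha\max_{z\in[A,B]}g_t(z)=A_tW_t^\alpha$. When $W_t<0$ the constraint becomes $z\in[-B,-A]$, and now $W_{t+1}$ carries the opposite sign to $1+r_t+zy_{t+1}$, so the gain and loss terms swap the roles of their indicators; after factoring out $(-W_t)^\alpha$ the expression collapses to $(-W_t)^\alpha l_t(z)$, giving $(-W_t)^\alpha\max_{z\in[-B,-A]}l_t(z)=-B_t(-W_t)^\alpha$. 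Combining the two cases reproduces the claimed formula with $A_t,B_t,g_t,l_t$ exactly as stated.

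The main obstacle I anticipate is rigorously justifying the separation of the maximization, i.e. the Bellman step that interchanges the outer maximum over $v_t$ with the conditional expectation and pulls the later maximum inside. This needs a measurable-selection argument ensuring the inner optimizer is realized by an $\mathcal{F}_{t+1}$-adapted admissible strategy; fortunately the induction hypothesis shows the optimizers take the explicit form $k_{t+1}^*W_{t+1}$ and $\hat{k}_{t+1}^*W_{t+1}$ with $\mathcal{F}_{t+1}$-measurable multipliers, so the selection is explicit and the two inequalities underpinning the Bellman identity follow routinely. The remaining work — the sign bookkeeping and the factoring of $W_t^\alpha$ and $(-W_t)^\alpha$ — is a direct computation identical in form to the proof of Proposition \ref{th2}.
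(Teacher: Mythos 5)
Your proposal is correct and follows essentially the same route as the paper's Appendix D: backward induction anchored at Proposition \ref{1} (and Proposition \ref{th2}), the dynamic programming interchange of the outer maximum over $v_t$ with the conditional expectation of the inner maximized value, the reparametrization $v_t=zW_t$, and the sign-splitting with $\alpha$-homogeneity to produce $g_t$ and $l_t$. Your added remark on justifying the Bellman step via the explicit $\mathcal{F}_{t+1}$-measurable optimizers $k_{t+1}^*W_{t+1}$, $\hat{k}_{t+1}^*W_{t+1}$ is a welcome point of rigor that the paper glosses over, but it does not change the argument's structure.
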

\begin{proof}
We prove this in Appendix C.
\end{proof}

The following Theorem is our main result of the paper.

\begin{thm}\label{th4}
The optimal CPT-investment strategy $(v_{0}^*,v_1^*,...,v_T^*)$ is given recursively for $t=T-1,T-2,\ldots,0$ by
\[v_{t}^*=\left\{\begin{array}{ll}
k_{t}^*W_{t}&\text{if $W_{t}\geq 0$ },\\
\hat{k}^*_{t}W_{t}&
\text{if $W_{t}<0$},
\end{array}\right.\]
\begin{eqnarray*}
k_{t}^*=arg\max_{z\in[A,B]}g_{t}(z)
\end{eqnarray*}
and
\begin{eqnarray*}
\hat{k}_{t}^*=arg\max_{z\in[-B,-A]}l_{t}(z).
\end{eqnarray*}
\end{thm}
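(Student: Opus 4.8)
The plan is to read off the optimal controls directly from the backward induction already carried out in Proposition \ref{th3}, exploiting the homogeneity of the value function in the wealth variable. Write $V_t(W_t)$ for the optimal conditional CPT value at time $t$, i.e. the left-hand side of Proposition \ref{th3}; that proposition asserts
\[V_t(W_t)=A_t W_t^\alpha I_{W_t\geq 0}-B_t(-W_t)^\alpha I_{W_t<0},\]
and it is obtained from the dynamic programming recursion $V_t(W_t)=\max_{v_t\in\mathcal{A}_t}E[V_{t+1}(W_{t+1})\,|\,\mathcal{F}_t]$. This recursion is itself legitimate precisely because the choice of benchmarked wealth $\underline{W}_{T-1}^T=v_{T-1}y_T$ of \eqref{eu1}, together with the tower property \eqref{!2}, renders the criterion time consistent, so that the interchange of the conditional expectation and the inner maximization is valid. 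The Theorem then amounts to identifying the maximizer $v_t^*$ achieving the Bellman step at each stage.

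First I would work on the event $\{W_t\geq 0\}$. Here $|W_t|=W_t$, so $\mathcal{A}_t=\{v_t:AW_t\leq v_t\leq BW_t\}$ and every admissible control is written uniquely as $v_t=zW_t$ with $z\in[A,B]$. Substituting into the self-financing recursion \eqref{eq14} gives $W_{t+1}=W_t(1+r_t+zy_{t+1})$, and since $W_t\geq 0$ the sign of $W_{t+1}$ coincides with that of $1+r_t+zy_{t+1}$. Inserting this into the inductively known form of $V_{t+1}$ and factoring out the common $W_t^\alpha$ (using $W_t\in\mathcal{F}_t$) yields
\[E[V_{t+1}(W_{t+1})\,|\,\mathcal{F}_t]=W_t^\alpha\, g_t(z),\]
with $g_t$ exactly as defined in Proposition \ref{th3}. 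Maximizing over $z\in[A,B]$ therefore returns $V_t(W_t)=A_tW_t^\alpha$ with $A_t=\max_{z\in[A,B]}g_t(z)$, and the maximizing proportion is $z=k_t^*=\arg\max_{z\in[A,B]}g_t(z)$, so that $v_t^*=k_t^*W_t$.

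The same computation on the event $\{W_t<0\}$ uses $|W_t|=-W_t$; substituting $v_t=zW_t$ into the constraint $A|W_t|\leq v_t\leq B|W_t|$ and dividing by $W_t<0$ (which reverses the inequalities) shows that admissible controls correspond to $z\in[-B,-A]$. Here $W_{t+1}=W_t(1+r_t+zy_{t+1})$ is nonnegative exactly when $1+r_t+zy_{t+1}\leq 0$, and factoring out $(-W_t)^\alpha$ reduces the Bellman step to $E[V_{t+1}(W_{t+1})\,|\,\mathcal{F}_t]=(-W_t)^\alpha l_t(z)$ with $l_t$ as in Proposition \ref{th3}. Maximizing over $z\in[-B,-A]$ gives $V_t(W_t)=-B_t(-W_t)^\alpha$ and the maximizer $\hat{k}_t^*=\arg\max_{z\in[-B,-A]}l_t(z)$, whence $v_t^*=\hat{k}_t^*W_t$. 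Running this identification backward through $t=T-1,T-2,\ldots,0$, with the base case $t=T-1$ supplied by Theorem \ref{th111}, produces the stated recursive description of $(v_0^*,\ldots,v_{T-1}^*)$.

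Two points deserve care. First, each $g_t$ and $l_t$ is continuous in $z$ on the compact interval $[A,B]$ (respectively $[-B,-A]$), so the argmaxima exist; since $g_t$ and $l_t$ are $\mathcal{F}_t$-measurable through the conditional expectation, the resulting $k_t^*,\hat{k}_t^*$ are $\mathcal{F}_t$-measurable and hence $v_t^*=k_t^*W_t$ (respectively $\hat{k}_t^*W_t$) lies in $\mathcal{F}_t$ and is admissible. Second, I expect the only genuine difficulty not already absorbed into Proposition \ref{th3} to be the verification aspect: one must confirm that stitching together the per-period Bellman maximizers yields a globally optimal strategy rather than merely a myopically optimal one. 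This is exactly the dynamic programming principle underlying Proposition \ref{th3}, and it rests on the time consistency secured by the benchmarked wealth \eqref{eu1}; once that proposition is established, the Theorem follows simply by reading off its maximizers.
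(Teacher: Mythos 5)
Your proposal is correct and follows essentially the same route as the paper: Theorem \ref{th4} is obtained there by simply reading off the maximizers from Proposition \ref{th3}, whose Appendix D proof contains exactly the Bellman-step computation you spell out (substituting $v_t=zW_t$, splitting on the sign of $W_t$, factoring out $|W_t|^\alpha$, and identifying $k_t^*$, $\hat{k}_t^*$ as the argmaxima of $g_t$ and $l_t$). Your added remarks on measurability and existence of the argmax are refinements the paper leaves implicit, but they do not change the substance of the argument.
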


\begin{proof}
The above result is derived from Proposition \ref{th3}.
\end{proof}

\section{Numerical Analysis}

\subsection{Numerical simulation}
We suppose the excess return $y_t$ satisfies a normal distribution $N(\mu, \sigma)$. The interest rate is set to follow a Ho and Lee model
$$r_t=0.03 +0.003\sqrt{t} Z,\qquad Z\sim N(0,1).$$

 We test the sensitivity of optimal solution to different parameters $\alpha$, $\mu$ and $\delta$. Moreover,  we analyze how CPT-investors'  psychology and the characteristics of the stock influence the optimal CPT-investment strategies. When we test sensitivity, we set $W_0=0.8$ and discuss sensitivity of t-th period (t=0,1,...,10). Set $\lambda=2.20,\delta=0.69, A=-5, B=5.$ Let $\mu=0.045, \sigma=1.69$ in Figure 1 and Figure 2 and let $\alpha=0.88$ in Figure 3 and Figure 4. Firstly, in order to reveal the effect of CPT-investors' psychology on the optimal portfolio choice, we analyze the sensitivity of the optimal solution to the parameter $\alpha$ (see Figure 1). The ratio invested in the risky asset is increasing in $\alpha.$ Moreover it is slowly decreasing in time. As expected the investment in the risky asset is decreasing in $\sigma$ and increasing in $\mu.$ Moreover it turns out to be slowly decreasing in time. It interesting to point out that, conform Figure 5 and Figure 6, a random interest rate forces a higher investment in the risky asset, fact explained by the risk incurred in investing in the risk free asset.
\vspace{0.5cm}

{\bf Acknowledgement} The article is  supported by National Natural Science Foundation of China (71201051),
Young Talents Training Plan of Hunan Normal University(2014YX04), Philosophical and Social Science Fund of Hunan (No.14YBA264), and NSERC grant 396-5058-14.
These fundings did not lead to any conflict of interests regarding the publication of this manuscript.

\section{Appendix}

\subsection{Appendix A: proof of Proposition \ref{1}}
\begin{proof}
When $W_{T-1}\geq 0$ and $0\leq v_{T-1}\leq B W_{T-1} $, we directly employ the result of \cite{Bernard1} in order to get
\begin{eqnarray}
&&U(\underline{W}_{T-1}^T)\nonumber\\
&=&v_{T-1}^\alpha\Big(\int_0^{+\infty}T^+(1-F_{y_{T}}(x))du^+(x)-\int_0^{+\infty}T^-(F_{y_{T}}(-x))du^-(x)\Big).\nonumber\\
\end{eqnarray}

Let $v_{T-1}=W_{T-1}z,$ so
\begin{eqnarray}
&&U(\underline{W}_{T-1}^T)\nonumber\\
&=&W_{T-1}^\alpha z^\alpha\Big(\int_0^{+\infty}T^+(1-F_{y_{T}}(x))du^+(x)-\int_0^{+\infty}T^-(F_{y_{T}}(-x))du^-(x)\Big).\nonumber\\
&=&W_{T-1}^\alpha z^\alpha k(T-1)
\end{eqnarray}

Similarly, when $W_{T-1}\geq 0$ and $A W_{T-1}\leq v_{T-1}\leq 0 $, we can show that
\begin{eqnarray}
&&U(\underline{W}_{T-1}^T)\nonumber\\
&=&(-v_{T-1})^\alpha\Big(\int_0^{+\infty}T^+(1-F_{-y_{T}}(x))du^+(x)-\int_0^{+\infty}T^-(F_{-y_{T}}(-x))du^-(x)\Big)\nonumber\\
&=&W_{T-1}^\alpha(-z)^\alpha\Big(\int_0^{+\infty}T^+(1-F_{-y_{T}}(x))du^+(x)-\int_0^{+\infty}T^-(F_{-y_{T}}(-x))du^-(x)\Big)\nonumber\\
&=&W_{T-1}^\alpha(-z)^\alpha h(T-1).
\end{eqnarray}

Hence, when $W_{T-1}\geq 0$, we have that
\begin{eqnarray}
U(\underline{W}_{T-1}^T)&=&W_{T-1}^\alpha[ z^\alpha k(T-1)I_{z\in [0,B]}+(-z)^\alpha h(T-1)I_{z\in [A,0]}]\nonumber\\
&=&W_{T-1}^\alpha g_{T-1}(z).
\end{eqnarray}

When $W_{T-1}<0$ and $0\leq v_{T-1}\leq -B W_{T-1} $, one easily gets
\begin{eqnarray}
&&U(\underline{W}_{T-1}^T)\nonumber\\
&=&(-W_{T-1})^\alpha(-z)^\alpha\Big(\int_0^{+\infty}T^+(1-F_{y_{T}}(x))du^+(x)-\int_0^{+\infty}T^-(F_{y_{T}}(-x))du^-(x)\Big)\nonumber\\
&=&(-W_{T-1})^\alpha(-z)^\alpha k(T-1)
\end{eqnarray}
When $W_{T-1}<0$ and $-A W_{T-1}\leq v_{T-1}\leq 0$, we show that
\begin{eqnarray}
&&U(\underline{W}_{T-1}^T)\nonumber\\
&=&(-v_{T-1})^\alpha\Big(\int_0^{+\infty}T^+(1-F_{-y_{T}}(x))du^+(x)-\int_0^{+\infty}T^-(F_{-y_{T}}(-x))du^-(x)\Big)\nonumber\\
&=&(-W_{T-1})^\alpha z^\alpha\Big(\int_0^{+\infty}T^+(1-F_{-y_{T}}(x))du^+(x)-\int_0^{+\infty}T^-(F_{-y_{T}}(-x))du^-(x)\Big)\nonumber\\
&=&(-W_{T-1})^\alpha z^\alpha h(T-1).
\end{eqnarray}

Therefore, when $W_{T-1}<0$, we have that
\begin{eqnarray}
U(\underline{W}_{T-1}^T)&=&(-W_{T-1})^\alpha[(-z)^\alpha k(T-1)I_{z\in [-B,0]}+z^\alpha h(T-1)I_{z\in [0,-A]}]\nonumber\\
&=&(-W_{T-1})^\alpha l_{T-1}(z).
\end{eqnarray}

\end{proof}
\subsection{Appendix B: proof of Proposition \ref{th2}}
\begin{proof}
Since
\begin{eqnarray}
\underline{W}_{T-2}^T=v_{T-1}y_{T}\quad\mbox{given that the initial time is}\,\,\,T-2,\end{eqnarray}
and
$$
U(\underline {W}_{T-2}^T)=E[U(\underline {W}_{T-1}^T) |\mathcal{F}_{T-2}],
$$
then
$$
\max_{v_{T-2}\in\mathcal{A}_{T-2},v_{T-1}\in\mathcal{A}_{T-1} }U(\underline {W}_{T-2}^T)=\max_{v_{T-2}\in\mathcal{A}_{T-2}}E[\max_{v_{T-1}\in\mathcal{A}_{T-1} }U(\underline {W}_{T-1}^T) |\mathcal{F}_{T-2}].
$$
By applying Proposition \ref{1} to the above equation, one gets
\begin{eqnarray}
&&\max_{v_{T-2}\in\mathcal{A}_{T-2},v_{T-1}\in\mathcal{A}_{T-1}}U(\underline {W}_{T-2}^T)\nonumber\\
&=&\max_{v_{T-2}\in\mathcal{A}_{T-2}}E[W_{T-1}^\alpha A_{T-1}I_{W_{T-1}\geq 0}-(-W_{T-1})^\alpha B_{T-1} I_{W_{T-1}<0}|\mathcal{F}_{T-2}].\nonumber\\
\end{eqnarray}

From equation (\ref{eq14}) it follows that
\begin{eqnarray*}
W_{T-1}=(1+r_{T-2})W_{T-2}+v_{T-2}y_{T-1}.
\end{eqnarray*}
Let $$v_{T-2}=W_{T-2}k_{T-2}.$$
When $W_{T-2}\geq 0$, it is straightforward to show that
\begin{eqnarray}
&& \max_{v_{T-2}\in\mathcal{A}_{T-2},v_{T-1}\in\mathcal{A}_{T-1}}U(\underline {W}_{T-2}^T)\nonumber\\
&=&W_{T-2}^\alpha\max_{k_{T-2}\in[A,B]} E[(1+r_{T-2}+k_{T-2}y_{T-1})^\alpha A_{T-1}I_{1+r_{T-2}+k_{T-2}y_{T-1}\geq 0}\nonumber\\
&-&(-1-r_{T-2}-k_{T-2}y_{T-1})^\alpha B_{T-1} I_{1+r_{T-2}+k_{T-2}y_{T-1}<0}|\mathcal{F}_{T-2}]\nonumber\\
&=&W_{T-2}^\alpha \max_{k_{T-2}\in[A,B]}g_{T-2}(k_{T-2})\nonumber\\
&=&W_{T-2}^\alpha A_{T-2}.
\end{eqnarray}

When $W_{T-2}<0$, similarly, we confirm that
\begin{eqnarray}
&&  U(\underline {W}_{T-2}^T)\nonumber\\
&=&(-W_{T-2})^\alpha\max_{\hat{k}_{T-2}\in[-B,-A]} E[(-1-r_{T-2}-\hat{k}_{T-2}y_{T-1})^\alpha A_{T-1}I_{1+r_{T-2}+\hat{k}_{T-2}y_{T-1}<0}\nonumber\\
&-&(1+r_{T-2}+\hat{k}_{T-2}y_{T-1})^\alpha B_{T-1} I_{1+r_{T-2}+\hat{k}_{T-2}y_{T-1}\geq0} |\mathcal{F}_{T-2}]\nonumber\\
&=&(-W_{T-2})^\alpha\max_{\hat{k}_{T-2}\in[-B,-A]}l_{T-2}(\hat{k}_{T-2})\nonumber\\
&=&-(-W_{T-2})^\alpha B_{T-2}.
\end{eqnarray}
Therefore, Propsition \ref{th2} holds.
\end{proof}
\subsection{Appendix C: proof of Proposition \ref{th3}}
\begin{proof}

We use mathematical induction to prove this proposition.
Proposition \ref{1} and Proposition \ref{th2} display that the conclusion of Proposition \ref{th3} holds at times T-1 and T-2.
We suppose the conclusion holds at time $t+1$. Namely,
\begin{eqnarray}
 \max_{v_{i}\in \mathcal{{A}}_{i},\, i=t+1,t+2\ldots T-1}U(\underline{W}_{t+1}^T)&=&A_{t+1}W_{t+1}^\alpha I_{W_{t+1}\geq0}-B_{t+1}(-W_{t+1})^\alpha I_{W_{t+1}<0}.
\end{eqnarray}

Since
\begin{eqnarray}
\underline{W}_{t}^T=v_{T-1}y_{T}\quad\mbox{given that the initial time is}\,\,\,t,\end{eqnarray}
and
$$
U(\underline {W}_{t}^T)=E[U(\underline {W}_{T-1}^T) |\mathcal{F}_{t}],
$$
then
\begin{eqnarray}
&& \max_{v_{i}\in \mathcal{{A}}_{i},\, i=t,t+1\ldots T-1} U(\underline {W}_{t}^T)\nonumber\\
&=&\max_{v_{t}\in\mathcal{{A}}_{t} }E_{t}[  \max_{v_{i}\in \mathcal{{A}}_{i},\, i=t,t+1\ldots T-1}U(\underline {W}_{t+1}^T)]\nonumber\\
&=&\max_{v_{t} \in\mathcal{{A}}_{t} }E[A_{t+1}W_{t+1}^\alpha I_{W_{t+1}\geq0}-B_{t+1}(-W_{t+1})^\alpha I_{W_{t+1}<0}\Big|\mathcal{F}_{t}].\nonumber\\
\end{eqnarray}
Let $$v_{t}=W_{t}k_{t}.$$
Since
\begin{eqnarray*}
W_{t+1}=(1+r_{t})W_{t}+v_{t}y_{t+1},
\end{eqnarray*}
we have
\begin{eqnarray*}
W_{t+1}=W_{t}(1+r_{t}+k_{t}y_{t+1}).
\end{eqnarray*}

Therefore,
when $W_{t}\geq 0$,
\begin{eqnarray}
&& \max_{v_{i}\in \mathcal{{A}}_{i},\, i=t,t+1\ldots T-1} U(\underline {W}_{t}^T)\nonumber\\
&=&W_{t}^\alpha\max_{k_{t}\in[A,B]} E[(1+r_{t}+k_{t}y_{t+1})^\alpha A_{t+1}I_{1+r_{t}+k_{t}y_{t+1}\geq 0}\nonumber\\
&-&(-1-r_{t}-k_{t}y_{t+1})^\alpha B_{t+1} I_{1+r_{t}+k_{t}y_{t+1}<0}|\mathcal{F}_{t}]\nonumber\\
&=&W_{t}^\alpha \max_{k_{t}\in[A,B]}g_{t}(k_{t})\nonumber\\
&=&W_{t}^\alpha A_{t}.
\end{eqnarray}

When $W_{t}<0$,  we can similarly find that
\begin{eqnarray}
&&\max_{v_{i}\in \mathcal{{A}}_{i},\, i=t,t+1\ldots T-1} U(\underline {W}_{t}^T)\nonumber\\
&=&(-W_{t})^\alpha\max_{\hat{k}_{t}\in[-B,-A]} E[(-1-r_{t}-\hat{k}_{t}y_{t+1})^\alpha A_{t+1}I_{1+r_{t}+\hat{k}_{t}y_{t+1}<0}\nonumber\\
&-&(1+r_{t}+\hat{k}_{t2}y_{t+1})^\alpha B_{t+1} I_{1+r_{t}+\hat{k}_{t}y_{t+1}\geq0}|\mathcal{F}_{t}]\nonumber\\
&=&(-W_{t})^\alpha\max_{\hat{k}_{t}\in[-B,-A]}l_t(\hat{k}_{t})\nonumber\\
&=&-(-W_{t})^\alpha B_{t}.
\end{eqnarray}

Thus, Proposition \ref{th3} holds.
\end{proof}

\par

\bibliographystyle{amsplain}
\end{document}